\newcommand{\tr}{\textrm{tr}}
\newtheorem{lem}{Lemma}[section]
\newtheorem{proof}{Proof}
\newtheorem{defn}{Definition}[section]
\newtheorem{theorem}{Theorem}
\newtheorem{remark}{Remark}
\newtheorem{discussion}{Discussion}
\begin{document}


\title{Quantum Data Compression and Relative Entropy Revisited}%

\author{Alexei~Kaltchenko}
\affiliation{Department of Physics and Computing, Wilfrid Laurier University, Waterloo, Ontario, N2L3C5, Canada.}
\email{akaltchenko@wlu.ca}


\begin{abstract}
B.~Schumacher and M.~Westmoreland have established a quantum analog of a well-known classical information theory result on a role of relative entropy as a measure of non-optimality in (classical) data compression. In this paper, we provide an alternative, simple and constructive proof of this result by constructing quantum compression codes (schemes) from classical data compression codes. %
Moreover, as the quantum data compression/coding task can be effectively reduced to a (quasi-)classical one, we show that relevant results from classical information theory and data compression become applicable and therefore can be extended to the quantum domain.
\end{abstract}

\pacs{03.67.Hk, 03.67.-a, 03.65.Ta}

\keywords{Suggested keywords}
\maketitle

\section{Introduction}
Quantum relative entropy, defined as~$S(\rho ||\sigma)
\triangleq\tr (\rho\log\rho) - \tr (\rho\log\sigma)$, is always
non-negative and is equal to zero if and only if $\rho = \sigma$, where~$\rho$ and~$\sigma$ are density operators defined on the same finite-dimensional Hilbert space. For a review of the properties of quantum relative entropy as well as the properties of quantum (von Neumann) entropy, defined as~$S(\rho) = \tr (\rho \log \rho)$, see, for instance,~\cite{vedral}.

Both quantum (von Neumann) entropy and quantum relative entropy can be viewed as quantum extensions of their respective classical counter-parts and inherit many of their properties and applications. Such inheritance however  is often neither  automatic, nor obvious. For example, classical information theory\cite[Chap.~5]{cover} provides the following natural interpretations of Shannon entropy and relative entropy. A compression code (scheme), for example, Huffman code, which is made optimal for an  information source with a given probability distribution~$p$ of characters, would require $H(p)$ bits per input letter to encode a sequence emitted by the source, where $H(p)$ stands for the Shannon entropy  of  the probability distribution~$p$ and is defined by
\begin{equation}
H(p)\triangleq -\sum\limits {p (x)\log p(x)}.
\end{equation}




 Thus, if using a compression code\footnote{See Subsection~\ref{subsec-optimal} for the formal definitions of information sources, compression codes, etc.} (scheme) made optimal for an information source with probability distribution~$q$, also called  as ``$q$-source'', one could compress such a source with $H(q)$ bits per input character. If, to compress the same~$q$-source, one uses  a {\em different} compression code (scheme)  optimal for a source with a {\em different} probability distribution~$p$ (i.e. $p$-source), it will require $H(q)+D(q ||p)$ bits per input character, where~$D(q||p)$ stands for the relative entropy function for  the probability distribution~$q$ and~$p$  and is defined by
\begin{equation}
D(q||p) \triangleq \sum\limits {q(x)\log
\frac{{q(x)}}{{p(x)}}}.
\end{equation}
Accordingly, the quantity~$D(q ||p)$ represents an additional encoding cost which arises from encoding a $q$-source
with a code optimal for a $p$-source. 
This (classical) data compression interpretation of (classical) Shannon and relative entropies extends\cite{schum,SW} to the quantum domain if one replaces classical information sources having probability distributions~$p$ and~$q$ with quantum information sources having density matrices~$\rho$ and~$\sigma$, respectively, as well as (classical) Shannon and relative entropies  with (quantum) von Neumann and  relative entropies, respectively.

In this paper, we provide an alternative, simple and constructive proof of this result. We build quantum compression codes (schemes) from classical compression codes. 
As the original source density matrix~$\rho$  can simply be replaced with~$\tilde \rho$,  a so-called ``effective'' source matrix\cite{universal-iid-unknown-entropy} with respect to our computational basis, our quantum compression/coding task is effectively reduced to a (quasi-)classical one, with  classical information sources having probability distributions~$p_{\tilde \rho}$ and~$p_{\sigma}$, which are formed by the eigenvalues of~$\tilde \rho$ and~$\sigma$, respectively. Thus, relevant results from classical information theory and data compression become applicable and therefore can be extended to the quantum domain.

Our paper is organized as follows. In Section~\ref{ClassicalDataCompression}, we briefly review some classical data compression definitions and results, including information sources and source codes (also known as ``data compression codes''). In Section~\ref{turning}, we introduce quantum information sources, quantum data compression and  discuss how classical data compression codes can be ``turned'' into quantum data compression codes. In Section~\ref{effective-sec}, we introduce an ''effective'' density matrix for a quantum information source. In Section~\ref{MainResult}, we state our main results: Lemma~\ref{main-lem} and Theorem~\ref{matched-basis} followed by a discussion. We conclude the paper with a brief conclusion section.


\section{Classical Data Compression}\label{ClassicalDataCompression}
\subsection{Classical Information Sources, Compression Codes, and Optimal Data Compression}\label{subsec-optimal}
Let $\mathcal{A}= \left\{ a_1,a_2, \dots,a_{|\mathcal{A}|}
\right\}$ be a  data sequence alphabet, where the notation
$|\mathcal{A}|$ stands for the cardinality of~$\mathcal{A}$.
Binary alphabet~$ \{0, 1 \}$ and Latin alphabet ~$ \{a, b, c, d, \ldots, x, y, z \}$ are widely-used alphabet examples. We will also use the notation $|sequence\_name|$ to denote the
sequence length. Let $\mathcal{A}^*$, $\mathcal{A}^{+}$, and
$\mathcal{A}^{\infty}$ be respectively the set of all finite
sequences, including the empty sequence,  the set of all finite
sequences of positive length, and the set of all infinite
sequences, where all the sequences are drawn from~$\mathcal{A}$.
For any positive integer~$n$, $\mathcal{A}^n$ denotes the set of
all sequences of length $n$ from $\mathcal{A}$. A sequence
from~$\mathcal{A}$ is sometimes called an $\mathcal{A}$-sequence.
To save space, we denote an $\mathcal{A}$-sequence $x_1,
x_2,\ldots, x_n$ by~$x^n$ for every integer $n>0$.
\begin{defn}
An alphabet~$\mathcal{A}$ independently and identically distributed (i.i.d.) information source is an i.i.d. discreet  stochastic process, which is defined by a single-letter (marginal) probability distribution of characters from~$\mathcal{A}$. We extend the single-letter probability distribution to a probability measure on the probability space constructed for the set~$\mathcal{A}^+$ of all $\mathcal{A}$-sequences. We say ``$\mathcal{A}$-source'' or ``$p$-source'' to emphasize that the source has a certain alphabet~$\mathcal{A}$ or a probability distribution~$p$.
\end{defn}
Thus, we assume that every $\mathcal{A}$-sequence is generated by an i.i.d. information source.
\begin{defn}
An \textit{alphabet $\mathcal{A}$ source code}~$\mathbf{C}$ is a
mapping from $\mathcal{A}^+$ into $\{0,1 \}^+$, where, according
to our notation, $\mathcal{A}^+$ is the set of all
$\mathcal{A}$-strings and  $\{0,1 \}^+$ is the set of all binary
strings.
\end{defn}
We restrict our attention to {\em uniquely decidable} codes that is, for every~$n$ and every $x_1,
x_2,\ldots, x_n$, we have
$$ \mathbf{C^{-1}} \left( {\mathbf{C} (x_1, x_2,\ldots, x_n)} \right) = x_1,
x_2,\ldots, x_n$$

\smallskip

\noindent For every~$n$ and every $x_1,x_2,\ldots, x_n$, we call a sequence~$\mathbf{C} (x_1, x_2,\ldots, x_n)$ a {\em codeword}. We often call a source code a {\em compression} code and use the terms ``source code'' and ``compression code'' interchangeably.

\begin{defn}
For any given source  and an alphabet $\mathcal{A}$ source code $\mathbf{C}$, we define the compression rate of~$\mathbf{C}$ in bits per input
character by $\left\langle {\frac{1}{n} \left| \mathbf{C}(x_1, x_2,\ldots, x_n)
\right|} \right\rangle _p$,
where the  length of a compressed sequence (codeword)~$\mathbf{C} (x_1, x_2,\ldots, x_n)$ is divided by~$n$ and averaged with respect to information source's  probability measure~$p$.
\end{defn}

\begin{theorem}[Optimal Data Compression]
For an i.i.d. information source with probability measure~$p$ and any compression code, the best achievable compression rate is given by Shannon entropy~$H(p)$, which is defined by
\begin{equation}
H(p)\triangleq -\sum\limits_{x \in \mathcal{A} } {p (x)\log p(x)}
\end{equation}
\end{theorem}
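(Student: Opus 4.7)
The plan is to split the argument into the usual two halves: an achievability (direct) part constructing codes whose rate approaches $H(p)$, and a converse showing no uniquely decodable code can beat $H(p)$. Since the theorem is stated for the asymptotic per-letter rate, I would work with blocks of length $n$ and let $n\to\infty$ at the end.

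For achievability, I would invoke the Asymptotic Equipartition Property for the i.i.d.\ source: for every $\epsilon>0$ the weakly typical set
\begin{equation}
T_\epsilon^n = \Bigl\{x^n\in\mathcal{A}^n : \bigl|-\tfrac{1}{n}\log p(x^n)-H(p)\bigr|<\epsilon\Bigr\}
\end{equation}
satisfies $\Pr(T_\epsilon^n)\to 1$ and $|T_\epsilon^n|\le 2^{n(H(p)+\epsilon)}$. I would then define a block code that prepends a one-bit flag, assigns to each $x^n\in T_\epsilon^n$ a distinct binary string of length $\lceil n(H(p)+\epsilon)\rceil$, and to each $x^n\notin T_\epsilon^n$ a distinct string of length $\lceil n\log|\mathcal{A}|\rceil$. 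This code is clearly uniquely decodable, and computing the expected per-letter length gives a bound of the form $H(p)+\epsilon+\delta_n$ with $\delta_n\to 0$, establishing that $H(p)$ is achievable.

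For the converse, I would apply the Kraft–McMillan inequality, which tells us that any uniquely decodable binary code with codeword lengths $\ell(x^n)$ on $\mathcal{A}^n$ must satisfy $\sum_{x^n}2^{-\ell(x^n)}\le 1$. Normalizing $q(x^n)\triangleq 2^{-\ell(x^n)}/Z$ with $Z\le 1$ yields a sub-probability distribution, and the nonnegativity of the classical relative entropy $D(p^n\|q)$ (together with $\log Z\le 0$) gives
\begin{equation}
\sum_{x^n}p(x^n)\,\ell(x^n)\ \ge\ -\sum_{x^n}p(x^n)\log p(x^n)\ =\ nH(p),
\end{equation}
using additivity of Shannon entropy for the i.i.d.\ product distribution. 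Dividing by $n$ shows that no uniquely decodable code can do better than $H(p)$ bits per letter, matching the direct part.

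The delicate step, and the only place where I expect genuine obstacles rather than routine bookkeeping, is the converse: the statement as written quantifies over all uniquely decodable (not merely prefix) codes, so one really does need McMillan's extension of Kraft's inequality, which is proved by raising the Kraft sum to a large power and using unique decodability to bound the number of source blocks mapping to each output length. Given that, and the information inequality $D(p\|q)\ge 0$, the rest of the argument is standard asymptotic analysis. Once both directions are in place, the theorem follows by letting $\epsilon\to 0$ and $n\to\infty$ in the achievability bound.
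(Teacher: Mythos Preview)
Your proof is correct and is essentially the standard textbook argument (typical-set achievability plus Kraft--McMillan converse with the information inequality). However, the paper does not actually prove this theorem: it is stated in Section~\ref{subsec-optimal} as a classical result and implicitly referred to Cover and Thomas~\cite{cover}, with no proof supplied. So there is no ``paper's own proof'' to compare against --- the theorem is quoted as background, and your proposal simply fills in what the paper takes for granted.
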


\begin{defn}
For any probability distribution~$p$, we call a source code {\em optimal} for an i.i.d. source with probability distribution~$p$ if the compression rate for this code is equal to Shannon entropy~$H(p)$. We denote such a code by~$\mathbf{C}_p$.
\end{defn}
\begin{remark}
Optimal compression code(s)  exist for any and every source's probability distribution.
\end{remark}

\subsection{Classical Data Compression and Relative Entropy}

Suppose we have a compression code~$\mathbf{C}_p$, that is~$\mathbf{C}_p$  is  optimal for an $\mathcal{A}$-source with a probability distribution~$p$. What happens to the compression rate if we use~$\mathbf{C}_p$ to compress another $\mathcal{A}$-source with a different distribution~$q$? It is not difficult to infer  that the compression rate will generally increase, but by how much? The increase in the compression rate will be equal to relative entropy function~$D(q||p)$, which is defined by
\begin{equation}
D(q||p) \triangleq \sum\limits_{x\in \mathcal{A} } {q(x)\log
\frac{{q(x)}}{{p(x)}}},
\end{equation}
where relative entropy function~$D(q||p)$ is always non-negaitive\cite{cover} and equal to zero if and only if $q \equiv p$.
More precisely, if using
a code~$\mathbf{C}_q$ (i.e. optimal for a distribution~$q$), one could encode a $q$-source with $H(q)$ bits per input character. If, instead, one uses a code~$\mathbf{C}_p$ (i.e. optimal for~$p$) to encode a $q$-source, it will require $H(q)+D(q ||p )$ bits per input character.

\section{Turning Classical Source Codes into Quantum Transformations}\label{turning}
In this section we  review how a classical one-to-one mapping
(code) can be naturally extended\footnote{For more details, see~\cite{universal-iid-unknown-entropy} or \cite{langford}} to
a unitary quantum transformation. But first, we need to define quantum information sources.
\subsection{Quantum Information Sources}
 Informally,   a quantum  identically and independently distributed  (i.i.d.) information source is an infinite sequence of identically and independently generated quantum states (say, photons, where each photon can have one of two fixed polarizations). Each such state ``lives'' in its ``own'' copy of the same Hilbert space  and is characterized by its ``own'' copy of the same density operator acting on the Hilbert space. We denote the Hilbert space and the density operator  by~${\cal H}$ and~$\rho$, respectively. Then, the sequence of~$n$ states ``lives'' in the space ${\cal H}^{\otimes n}$ and is characterized by the density operator $\rho ^{\otimes n}$
acting on~${\cal H}^{\otimes n}$. For example, in case of a photon sequence, we have $\dim {\cal H} =2 $. Thus, to define a quantum source, we just need to specify~$\rho$ and~${\cal H}$. In this paper, we restrict our attention to quantum sources defined on  two-dimensional Hilbert spaces. Often, just~$\rho$  is given and~${\cal H}$ is not explicitly specified as all Hilbert spaces of the same dimensionality  are isomorphic.
\subsection{Turning Classical Source Codes into Quantum Transformations}
  Now we ready to explain how a classical source code can be turned into a quantum transformation. For the rest of the paper, we will only be considering {\em binary} classical sources with alphabet~$\mathcal{A} = \{0, 1 \}$ and quantum information sources defined on a 2-dimensional Hilbert space.

   Let~$\mathcal{B} \triangleq \bigl\{|0\rangle,
|1\rangle\bigr\}$ be an arbitrary, but fixed orthonormal basis in the quantum information source space~${\cal H}$.   We call~$\mathcal{B}$ a {\em computational basis}. Then, for a sequence of~$n$ quantum states, the orthonormal basis will have $2^n$ vectors of the form
\begin{equation}\label{tenzor-basis}
|e_1\rangle \otimes |e_2\rangle \otimes |e_3\rangle \otimes \cdots \otimes |e_n\rangle,
\end{equation}
where, for every integer~$1 \leqslant i \leqslant n$, $|e_i\rangle \in \mathcal{B}  \triangleq \bigl\{|0\rangle,
|1\rangle\bigr\}$. To emphasize  the symbolic correspondence between  binary strings of length~$n$ and the basis vectors in~${\cal H}^{\otimes n}$, we can rewrite~\eqref{tenzor-basis} as
\begin{equation}
|e_1 e_2 e_3 \ldots e_n\rangle,
\end{equation}
where, for every integer~$1 \leqslant i \leqslant n$, $e_i \in \{0,1\}$. In other words, if we add the Dirac's  ``ket'' notation at the beginning and the end of any binary string of length~$n$, then we will get a basis vector in~${\cal H}^{\otimes n}$.

Thus, for any computational basis~$\mathcal{B} \in \mathcal{H}$, positive integer~$n$, and  a
classical fully reversible function (mapping) for~$n$-bit binary strings, we can define a unitary transformation on ${\cal H}^{\otimes n}$ as follows.

\noindent Let $ \varphi : \{0,1\}^n \rightarrow \{0,1\}^n$ be a
classical fully reversible function for $n$-bit binary strings. For
every $\varphi$ and $n$,  we define a unitary operator
$U_{\mathcal{B}, \varphi}^n: \mathcal{H}_{in} ={\cal H}^{\otimes n} \rightarrow \mathcal{H}_{out}$ by
the bases vectors mapping $U_{\mathcal{B}, \varphi}^n|e_1\dots e_n\rangle
=|\varphi(e_1\dots e_n)\rangle$, where, for every integer~$1 \leqslant i \leqslant n$, $e_i \in \{0,1\}$ and  $|e_i\rangle \in \mathcal{B}  \triangleq \bigl\{|0\rangle,|1\rangle\bigr\}$. We point out that~$\mathcal{H}_{out}$ denotes an output space which is an isomorphic copy of the input space~$\mathcal{H}_{in} ={\cal H}^{\otimes n}$, and~$\mathcal{H}_{out}$ may sometimes coincide with $\mathcal{H}_{in}$.

\smallskip

\noindent An elegant way of designing quantum circuits to implement~$U_{\mathcal{B}, \varphi}^n$ was  given by Langford in~\cite{langford}, where~$U_{\mathcal{B}, \varphi}^n$ was computed by the following two consecutive operations defined on $\mathcal{H}_{in} \otimes \mathcal{H}_{out}$:
\begin{equation}|e_1\dots e_n\rangle |e_1^{out}\dots e_n^{out}\rangle \longrightarrow
|e_1\dots e_n\rangle |e_1^{out}\dots e_n^{out} \oplus \varphi(e_1\dots
e_n)\rangle\label{trans1}
\end{equation}
\begin{equation}
|e_1\dots e_n\rangle |e_1^{out}\dots e_n^{out}\rangle \longrightarrow |e_1\dots
e_n \oplus \varphi^{-1}(e_1^{out}\dots e_n^{out}) \rangle |e_1^{out}\dots
e_n^{out}\rangle,\label{trans2}
\end{equation}
where $e_i, e_i^{out} \in \{0,1\}$, \ $|e_1^{out}\dots e_n^{out}\rangle \in \mathcal{H}_{out}$, and the notation $\oplus$ stands for bit-wise modulo-2 addition.

\smallskip

\noindent In the above, Relation~\eqref{trans1} maps a basis vector $|e_1  \ldots e_n
\rangle |\underbrace {0 \ldots 0}_n\rangle$ to the basis vector
$|e_1 \ldots e_n \rangle |\varphi(e_1  \ldots e_n )\rangle$.
Relation~\eqref{trans2} maps a basis vector $|e_1 \ldots e_n \rangle
|\varphi(e_1 \ldots e_n )\rangle$ to the basis vector $|\underbrace
{0 \ldots 0}_n\rangle |\varphi(e_1 \ldots e_n )\rangle$.

\smallskip

\noindent Another interesting approach for computing~$U_{\mathcal{B}, \varphi}^n$ was suggested in~\cite{universal-iid-unknown-entropy}, where Jozsa and Presnell used a so-called ``digitalization procedure''.
\begin{remark}
We reiterate  that, given a quantum source's Hilbert space~$\mathcal{H}$, the quantum transformation~$U_{\mathcal{B}, \varphi}^n$ entirely depends on- and determined by the selection of the computational basis~$\mathcal{B} \in \mathcal{H}$ and a classical fully reversible function~$\varphi$ (mapping) for~$n$-bit binary strings. We also point out that as long as $\varphi(\cdot)$ and $\varphi^{-1}(\cdot)$ can be classically
computed in linear time and space, then, $U_{\mathcal{B}, \varphi}^n$ can be computed\cite{bennett,BW,TS} in (log-)linear time and space, too.
\end{remark}
\subsection{Quantum Data Compression Overview}
We wish to compress a sequence of quantum states so that at a later stage the compressed sequence can be de-compressed . We also want that the decompressed sequence will be the same as- or ``close'' to the original sequence. The measure of ``closeness'' is called {\em quantum fidelity}. There are a few different definitions of fidelity. One widely used is
the fidelity between any pure state~$|\psi \rangle$ and mixed state~$\rho $, defined by
$$F( |\psi \rangle \langle \psi|, \rho )  = \langle \psi| \rho |\psi \rangle $$


We now introduce a new notation
$U_{\mathcal{B}, \mathbf{C}}^n$ for a unitary transformation
defined as~$U_{\mathcal{B}, \mathbf{C}}^n := \left.
{U_{\mathcal{B}, \varphi}^n} \right|_{\varphi=\mathbf{C}}$. Thus,
the unitary transformation~$U_{\mathcal{B}, \mathbf{C}}^n$ acts on a product space ${\cal
H}^{\otimes  n} \times {\cal H}^{\otimes  n}$ and can be viewed as a quantum extension of the classical compression code~$\mathbf{C}$. So to call~$U_{\mathcal{B}, \mathbf{C}}^n$, we will use terms ``unitary transformation'' and ``(quantum) compression code'' interchangeably.
\begin{defn}
Given a quantum i.i.d. source with density operator~$\rho$, we define a compression rate of $U_{\mathcal{B}, \mathbf{C}}^n$ by
\begin{equation}
\left\langle {\frac{1}{n} \left| U_{\mathcal{B}, \mathbf{C}}^n \ |z_1 z_2 z_3 \ldots z_n\rangle
\right|} \right\rangle,
\end{equation}
where~$|z_1 z_2 z_3 \ldots z_n\rangle \in {\cal H}^{\otimes  n}$  is a sequence emitted by the quantum source, and
the  length of compressed quantum sequence~$U_{\mathcal{B}, \mathbf{C}}^n \ |z_1 z_2 z_3 \ldots z_n\rangle$ is divided by~$n$ and averaged in the sense of a quantum-mechanical observable\footnote{Of course, the length a quantum sequence is a quantum-mechanical variable. See~\cite{SW} for a detailed discussion on the subject}.
\end{defn}
If we ``cut'' the compressed sequence~$U_{\mathcal{B}, \mathbf{C}}^n \ |z_1 z_2 z_3 \ldots z_n\rangle$ at the length ``slightly more'' than~$\left\langle { \left| U_{\mathcal{B}, \mathbf{C}}^n \ |z_1 z_2 z_3 \ldots z_n\rangle
\right|} \right\rangle$, then it can be de-compressed with high fidelity. Thus,~$\left\langle { \left| U_{\mathcal{B}, \mathbf{C}}^n \ |z_1 z_2 z_3 \ldots z_n\rangle \right|} \right\rangle$ is the average number of qubits needed to faithfully represent an uncompressed sequence of length~$n$. See~\cite{SW} for a detailed treatment.

\begin{theorem}[Schumacher's Optimal Quantum Data Compression\cite{schum}]
For a quantum i.i.d. source with a density operator~$\rho$ and any compression scheme, the best achievable compression rate with a high expected fidelity is given by  the von Neumann entropy, which is defined by
\begin{equation}
S(\rho) = \tr (\rho \log_2 \rho)
\end{equation}
\end{theorem}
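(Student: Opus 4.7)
The plan is to leverage the classical-to-quantum correspondence developed in Section~\ref{turning} by making an especially fortunate choice of computational basis, namely the eigenbasis of $\rho$ itself. First, I would diagonalize the source density operator as $\rho = \lambda_0 |\lambda_0\rangle\langle \lambda_0| + \lambda_1 |\lambda_1\rangle\langle \lambda_1|$ and set the computational basis $\mathcal{B} \triangleq \bigl\{|\lambda_0\rangle, |\lambda_1\rangle\bigr\}$. Then $\rho^{\otimes n}$ is diagonal in the induced product basis with diagonal entries $\prod_{k=1}^{n} \lambda_{e_k}$, which are precisely the probabilities assigned by the classical i.i.d.~$p_\rho$-source (with single-letter distribution $p_\rho = (\lambda_0, \lambda_1)$) to the binary string $e_1 e_2 \ldots e_n$. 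In this basis, the quantum source is \emph{de facto} a classical one.

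Next, I would take an optimal classical source code $\mathbf{C}_{p_\rho}$ for the $p_\rho$-source and lift it to the unitary $U_{\mathcal{B},\mathbf{C}_{p_\rho}}^n$ through the construction of Section~\ref{turning}. For every basis vector $|e_1\ldots e_n\rangle$ this unitary produces the codeword state $|\mathbf{C}_{p_\rho}(e_1\ldots e_n)\rangle$, whose quantum-mechanical length equals the classical codeword length $|\mathbf{C}_{p_\rho}(e_1\ldots e_n)|$. Consequently, the expected quantum compression rate with respect to $\rho^{\otimes n}$ equals
\begin{equation}
\frac{1}{n}\sum_{e_1,\ldots,e_n} \left(\prod_{k=1}^n \lambda_{e_k}\right) \bigl|\mathbf{C}_{p_\rho}(e_1\ldots e_n)\bigr|,
\end{equation}
which, by the classical Optimal Data Compression theorem, converges to $H(p_\rho) = -\lambda_0\log\lambda_0 - \lambda_1\log\lambda_1 = S(\rho)$. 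High expected fidelity then follows because, in this matched basis, truncating the compressed sequence slightly above its mean length is exactly the projection onto the quantum typical subspace of $\rho^{\otimes n}$, so de-compression recovers the source sequence with fidelity approaching one.

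For the converse (no scheme beats $S(\rho)$) I would argue as follows: with respect to any alternative computational basis $\mathcal{B}'$, the source effectively dephases into an ``effective'' density matrix $\tilde\rho$ whose eigenvalue distribution $p_{\tilde\rho}$ satisfies $H(p_{\tilde\rho}) = S(\tilde\rho) \geqslant S(\rho)$ by Klein's inequality, so the classical reduction can never yield a rate strictly below $S(\rho)$. A full converse covering arbitrary quantum encoders and decoders needs the standard typical-subspace lower bound combined with the monotonicity of fidelity under trace-preserving maps and the Fannes-type continuity of $S(\cdot)$. I expect this converse step to be the main obstacle: achievability reduces cleanly to classical Shannon theory once the basis is matched to $\rho$, but ruling out all conceivable quantum schemes --- including those exploiting coherences that lie outside the classical-to-quantum lifting framework of Section~\ref{turning} --- cannot be performed purely inside this framework and requires complementary, genuinely quantum machinery. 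Quantifying the penalty $S(\tilde\rho) - S(\rho)$ incurred in a mismatched basis is exactly the kind of relative-entropy phenomenon that Lemma~\ref{main-lem} and Theorem~\ref{matched-basis} will formalize in the sequel.
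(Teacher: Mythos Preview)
The paper does not prove this theorem at all: it is stated with a citation to Schumacher's original 1995 paper~\cite{schum} and treated as background, so there is no ``paper's own proof'' to compare against. Your proposal is therefore not competing with anything in the text.

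As for the substance of your sketch: the achievability direction is fine and is indeed exactly the content of the (also unproved) lemma that immediately follows the theorem in the paper --- diagonalize $\rho$, work in its eigenbasis, lift an optimal classical code, and read off the rate $H(p_\rho)=S(\rho)$. Your converse, however, is not a proof. You yourself concede that the effective-density-matrix argument only lower-bounds the rate for schemes of the specific form $U^n_{\mathcal{B},\mathbf{C}}$ built from a computational basis and a classical code; it says nothing about arbitrary CPTP encoders/decoders, entangled block codes, or schemes that do not factor through a fixed product basis at all. Invoking ``the standard typical-subspace lower bound combined with the monotonicity of fidelity under trace-preserving maps and the Fannes-type continuity of $S(\cdot)$'' is a list of ingredients, not an argument, and in fact the usual converse for Schumacher compression does not proceed via Fannes continuity. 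So what you have written is an honest achievability proof plus an acknowledged gap for the converse --- which is consistent with the paper's own decision simply to cite the result rather than prove it.
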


\begin{defn}
A compression code, for which the above rate is achieved, is called {\em optimal}.
\end{defn}

\begin{lem}
For a density operator~$\rho$, we choose  the eigenbasis of~$\rho$ to be our computational basis~$\mathcal{B}$. Let~$\mathbf{C}_{p_\rho}$ be a classical compression code, which is optimal for a probability distribution formed my the set of the eigenvalues  of~$\rho$. Then, a unitary transformation~$U_{\mathcal{B}, \mathbf{C}_{p_\rho}}^n$ is an optimal quantum compression code for a quantum information source with the density operator~$\rho$.
\end{lem}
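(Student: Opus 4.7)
The plan is to exploit the fact that, once $\mathcal{B}$ is chosen to be the eigenbasis of $\rho$, the full $n$-letter state $\rho^{\otimes n}$ is simultaneously diagonal in the product basis of $\mathcal{H}^{\otimes n}$. Writing $\rho = \lambda_0 |0\rangle\langle 0| + \lambda_1 |1\rangle\langle 1|$, one has
\begin{equation}
\rho^{\otimes n} = \sum_{e_1,\dots,e_n \in \{0,1\}} \Bigl(\prod_{i=1}^{n} \lambda_{e_i}\Bigr)\,|e_1\dots e_n\rangle\langle e_1\dots e_n|,
\end{equation}
so that the spectrum of $\rho^{\otimes n}$ reproduces precisely the product probabilities $p_\rho^{\otimes n}$ of the classical binary i.i.d. source with single-letter distribution $p_\rho = (\lambda_0,\lambda_1)$. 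This is the bridge that will reduce the quantum problem to the classical one.

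First I would compute the compression rate of $U_{\mathcal{B}, \mathbf{C}_{p_\rho}}^n$ directly from the definition. Because $U_{\mathcal{B}, \mathbf{C}_{p_\rho}}^n$ sends the computational basis vector $|e_1\dots e_n\rangle$ to the computational basis vector $|\mathbf{C}_{p_\rho}(e_1\dots e_n)\rangle$, the quantum ``length'' observable $\hat L$ is itself diagonal in the same basis, with eigenvalue $|\mathbf{C}_{p_\rho}(e_1\dots e_n)|$ on $|e_1\dots e_n\rangle$. The quantum-mechanical expectation $\tr\!\bigl(\rho^{\otimes n}\,\hat L\bigr)/n$ therefore collapses, by the joint diagonality established above, to the classical average codeword length
\begin{equation}
\frac{1}{n}\sum_{e_1,\dots,e_n} p_\rho^{\otimes n}(e_1\dots e_n)\,|\mathbf{C}_{p_\rho}(e_1\dots e_n)|,
\end{equation}
which is nothing other than the compression rate of $\mathbf{C}_{p_\rho}$ on the classical $p_\rho$-source. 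By the hypothesized classical optimality of $\mathbf{C}_{p_\rho}$, this rate equals $H(p_\rho)$. Since $S(\rho) = H(p_\rho)$ whenever $p_\rho$ is the eigenvalue distribution of $\rho$, the quantum compression rate equals $S(\rho)$, which by Schumacher's theorem is the information-theoretic optimum.

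The remaining point is fidelity: one must show that truncating the output $U_{\mathcal{B},\mathbf{C}_{p_\rho}}^n|z^n\rangle$ to a length only slightly above $n H(p_\rho)$ still permits faithful decompression. The same diagonal-in-eigenbasis structure carries this through: the projector onto strings whose codeword length exceeds the cutoff is diagonal in the computational basis, and its expectation against $\rho^{\otimes n}$ equals the classical probability of atypical long codewords, which vanishes because $\mathbf{C}_{p_\rho}$ is classically optimal. Translating this back through the unitary $U_{\mathcal{B},\mathbf{C}_{p_\rho}}^n$ gives a lower bound on $\langle \psi|\rho_{\mathrm{out}}|\psi\rangle$ that tends to $1$. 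The hard part is not any of these steps individually but rather justifying cleanly that the quantum expectation of the length observable coincides with the classical expectation of the codeword length under $p_\rho^{\otimes n}$, and that truncation-induced errors can be upper-bounded in fidelity by the same classical tail probabilities; both reductions break down if $\mathcal{B}$ is not the eigenbasis of $\rho$, which is precisely why the ``matched-basis'' assumption is indispensable.
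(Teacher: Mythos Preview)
The paper does not actually supply a proof for this lemma: it is stated bare at the end of Section~\ref{turning}, immediately after Schumacher's theorem and the definition of an optimal quantum code, and the surrounding discussion implicitly defers the construction to the cited works of Jozsa--Presnell and Langford. So there is no ``paper's own proof'' to compare against.

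Your argument is correct and is exactly the natural one. The essential observation---that choosing $\mathcal{B}$ to be the eigenbasis of $\rho$ makes $\rho^{\otimes n}$ diagonal in the product computational basis, so that the quantum expectation of the length observable collapses to the classical expected codeword length under $p_\rho^{\otimes n}$---is precisely the mechanism the paper relies on implicitly (and later makes explicit in the ``effective density matrix'' language, where the matched-basis case is the special situation $\tilde\rho=\rho$). Your invocation of classical optimality to obtain rate $H(p_\rho)=S(\rho)$ and of Schumacher's theorem to certify that this rate is quantum-optimal is exactly right. The fidelity paragraph is also sound: because the truncation projector is diagonal in the same basis as $\rho^{\otimes n}$, the fidelity loss is governed by the classical tail probability of long codewords, which vanishes for an optimal classical code. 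Your closing remark that both reductions fail when $\mathcal{B}$ is not the eigenbasis of $\rho$ anticipates precisely the ``mismatched bases'' discussion the paper gives in Remark~4.
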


\section{``Effective'' source density matrix}\label{effective-sec}
We recall a quantum (binary) identically and independently distributed  (i.i.d.) information source is defined by a density operator acting on a two-dimensional  Hilbert  space.

Let~$\mathcal{B} = \left\{ |b_i\rangle \right\}$ be our computational basis and suppose we have a quantum i.i.d. source with density operator~$\rho$ acting on our two-dimensional computational Hilbert  space. Let $\sum\limits_i {\lambda _i |\lambda _i \rangle \langle \lambda _i
|}$  be an orthonormal decompositions of~$\rho$.

For every $\rho$ and $\left\{ |b_i\rangle \right\}$, Jozsa and
Presnell defined\cite{universal-iid-unknown-entropy} a so-called
``effective'' source density matrix $\tilde \rho$ with respect to
a (computational) orthonormal basis $\left\{ |b_i\rangle \right\}$ as follows
\[
\tilde \rho  \triangleq \sum\limits_j {\eta _j |b_j \rangle \langle
b_j |},
\]
where~$\eta _j  \triangleq \sum\limits_i {P_{ij} \lambda _i }$  and~$\left[ {P_{ij} } \right]$ is a doubly stochastic matrix defined by
$$P_{ij}  = \langle \lambda _i |b_j \rangle \ \langle b_j | \lambda _i \rangle \geqslant 0.$$

We observe the following obvious fact. If the source eigenbasis coincides  with the computation basis, i.e. $\left\{ {|\lambda _i \rangle } \right\} = \left\{ |b_i\rangle \right\}$, then effective source density matrix coincides with the actual source density matrix, i.e. $\tilde \rho  = \rho$.

\section{Main Result}\label{MainResult}
%
\begin{defn}
With any density matrix~$\rho$,  we associate a probability distribution formed by the density matrix's eigenvalues and denote it by putting the density matrix notation as a subscript:~$p_{\rho}$. That is $p_{\rho} = \bigl\{ \lambda_0, \lambda_1  \bigr\}$, where~$\bigl\{ \lambda_0, \lambda_1  \bigr\}$ are the eigenvalues of~$\rho$.
\end{defn}
\begin{lem}\label{main-lem}
Let~$\rho$ and~$\sigma$ be density matrices defined on the same Hilbert space, with
orthonormal decompositions~$\rho = \sum\limits_i {\lambda _i \ |\lambda _i \rangle \langle \lambda _i
|}$ and~$\sigma = \sum\limits_j {\chi _j \ |\chi _j \rangle \langle
\chi _j |}$.

\noindent Let $\tilde \rho$ be the effective density matrix of $\rho$
with respect to the basis $\left\{{|\chi_i \rangle} \right\}$, and
let $\left\{{\eta _j } \right\}$ be the eigenvalues of $\tilde
\rho$.

\smallskip

\noindent Then, we have the following relation between quantum and classical entropies:
\begin{equation}\label{S-D-H-S-1}
S(\rho ||\sigma ) +  S(\rho )= D(p_{\tilde \rho} ||p_{\sigma} ) + H(p_{\tilde
\rho} ),
\end{equation}
where~$D(\cdot ||\cdot )$ is classical  relative entropy, $H(\cdot)$  is classical (Shannon) entropy, $p_{\tilde \rho}$ and $p_{\sigma}$ are  the probability distributions formed by eigenvalues sets of~$ \tilde \rho$ and~$\sigma$, respectively. That is we have $p_{\tilde
\rho}  = \left\{{\eta _j } \right\}$ and
$p_{ \sigma}  = \left\{{\chi_i } \right\}$.
\end{lem}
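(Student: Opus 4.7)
The plan is to show that both sides of \eqref{S-D-H-S-1} collapse to the single quantity $-\sum_j \eta_j \log \chi_j$ by a short direct computation, with the whole content of the lemma being the observation that $\tr(\rho\log\sigma)$ depends on $\rho$ only through its diagonal entries in the eigenbasis of $\sigma$, i.e.\ exactly through the effective matrix $\tilde\rho$.

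First I would simplify the left-hand side. Using the standard definitions $S(\rho\|\sigma)=\tr(\rho\log\rho)-\tr(\rho\log\sigma)$ and $S(\rho)=-\tr(\rho\log\rho)$, the $\tr(\rho\log\rho)$ contributions cancel, so
\[
S(\rho\|\sigma)+S(\rho) \;=\; -\tr(\rho\log\sigma).
\]
This is the only place where any quantum computation enters.

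Second, I would evaluate $\tr(\rho\log\sigma)$ by plugging in the spectral decomposition of $\sigma$. Since $\log\sigma=\sum_j(\log\chi_j)\,|\chi_j\rangle\langle\chi_j|$, cyclicity of the trace gives
\[
\tr(\rho\log\sigma)\;=\;\sum_j (\log\chi_j)\,\langle\chi_j|\rho|\chi_j\rangle.
\]
Expanding $\rho$ in its own eigenbasis yields $\langle\chi_j|\rho|\chi_j\rangle=\sum_i \lambda_i\,|\langle\lambda_i|\chi_j\rangle|^2 = \sum_i P_{ij}\lambda_i = \eta_j$, which is exactly the $j$-th eigenvalue of the effective density matrix $\tilde\rho$ with respect to the computational basis $\{|\chi_j\rangle\}$ introduced in Section~\ref{effective-sec}. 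Hence the left-hand side of \eqref{S-D-H-S-1} equals $-\sum_j \eta_j\log\chi_j$.

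Third, I would expand the right-hand side directly from the classical definitions of Shannon entropy and relative entropy, applied to the probability distributions $p_{\tilde\rho}=\{\eta_j\}$ and $p_\sigma=\{\chi_j\}$:
\[
D(p_{\tilde\rho}\|p_\sigma)+H(p_{\tilde\rho})\;=\;\sum_j\eta_j\log\tfrac{\eta_j}{\chi_j}\;-\;\sum_j\eta_j\log\eta_j\;=\;-\sum_j\eta_j\log\chi_j,
\]
which matches the left-hand side, closing the argument.

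There is essentially no hard step: everything reduces to one identity $\langle\chi_j|\rho|\chi_j\rangle=\eta_j$ together with trivial log-arithmetic. The only place to be careful is the index bookkeeping in $P_{ij}=|\langle\lambda_i|\chi_j\rangle|^2$, to verify that the diagonal entries of $\rho$ in the $\sigma$-eigenbasis really are the eigenvalues of $\tilde\rho$ (rather than something permuted); and implicitly one should note that the identity makes sense as stated only under the standard sign convention $S(\rho)=-\tr(\rho\log\rho)$, which is what is needed for the classical-quantum cancellation to go through.
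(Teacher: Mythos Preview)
Your proof is correct and follows essentially the same route as the paper: both arguments reduce to computing $\tr(\rho\log\sigma)=\sum_j\eta_j\log\chi_j$ via the spectral decomposition of $\sigma$ and the identification $\langle\chi_j|\rho|\chi_j\rangle=\eta_j$, after which the classical identity $D(p_{\tilde\rho}\|p_\sigma)+H(p_{\tilde\rho})=-\sum_j\eta_j\log\chi_j$ finishes the job. Your presentation is marginally cleaner in that you cancel the $\tr(\rho\log\rho)$ terms at the outset and match both sides to the common expression $-\sum_j\eta_j\log\chi_j$, whereas the paper reaches the same endpoint by adding and subtracting $\sum_j\eta_j\log\eta_j$ at the end; this is a cosmetic difference, not a different argument.
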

\begin{proof}
By the definition of  effective density matrix, we have
$$
\eta _j  \triangleq \sum\limits_i {P_{ij} \lambda _i },
$$
where $\left[ {P_{ij} } \right]$ is a doubly stochastic matrix,
defined by
\[
P_{ij}  = \langle \lambda _i |\chi _j \rangle \ \langle \chi _j | \lambda _i \rangle.
\]
By the definition of  quantum relative entropy, we have
\begin{equation*}
S(\rho ||\sigma ) \triangleq \tr (\rho \log  \rho) - \tr (\rho \log \sigma).
\end{equation*}
First, we look at the quantity~$\tr (\rho \log \sigma)$:
\begin{equation*}
\tr (\rho \log \sigma) = \tr (\rho \log \sigma  \sum\limits_i { |\lambda _i \rangle \langle \lambda _i
|})  = \sum\limits_i { \tr (\rho \log \sigma   |\lambda _i \rangle \langle \lambda _i|) }
= \sum\limits_i {\langle \lambda _i|  \rho \log \sigma   |\lambda _i \rangle }.
\end{equation*}
Substituting the identity~$\langle \lambda _i|  \rho = \lambda _i  \langle \lambda _i| $ into the above equation, we get
\begin{equation}\label{rho-log-sigma}
\tr (\rho \log \sigma)
= \sum\limits_i {\lambda _i \ \langle \lambda _i|  \log \sigma   |\lambda _i \rangle   }
\end{equation}
On the other hand, we have the following identity for the $\log$ function:
\begin{equation}\label{log-function}
 \log \sigma  = \sum\limits_j { \log (\chi _j) \ |\chi _j \rangle \langle
\chi _j |}
\end{equation}
and, therefore, substituting the right-hand-side of~\eqref{log-function} into $\langle \lambda _i|  \log \sigma   |\lambda _i \rangle $, we get
\begin{equation}\label{log-sigma}
\langle \lambda _i|  \log \sigma   |\lambda _i \rangle   = \langle \lambda _i|  \left({ \sum\limits_j { \log (\chi _j) \ |\chi _j \rangle \langle
\chi _j |}  }\right)  |\lambda _i \rangle = \sum\limits_j { \log (\chi _j) \ \langle \lambda _i |\chi _j \rangle \ \langle \chi _j | \lambda _i \rangle } = \sum\limits_j { \log (\chi _j) \ P_{ij} }
\end{equation}
Combining equations~\eqref{rho-log-sigma} and~\eqref{log-sigma}, we obtain
\begin{equation}\label{rho-log-sigma1}
\tr (\rho \log \sigma)
= \sum\limits_i {\lambda _i \  \sum\limits_j { \log (\chi _j) \ P_{ij} }  } =  \sum\limits_j { \log (\chi _j)  \  \sum\limits_i {  P_{ij} \lambda _i }  } = \sum\limits_j {\eta_j \ \log (\chi _j)}.
\end{equation}

\noindent Substituting the right-hand-side of the quantum entropy  identity~$S(\rho ||\sigma ) = \sum\limits_i {\lambda _i \log \lambda _i}$ and that of~\eqref{rho-log-sigma1} into the definition of  quantum relative entropy, we have
\begin{equation}
S(\rho ||\sigma ) = \sum\limits_i {\lambda _i \log \lambda _i} - \sum\limits_j {\eta_j \ \log \chi _j} = - S(\rho
) - \sum\limits_j {\eta_j \ \log \chi _j} = - S(\rho
) - \sum\limits_j {\eta_j \ \log \chi _j} + \sum\limits_j {\eta_j \ \log \eta_j}  - \sum\limits_j {\eta_j \ \log \eta_j}
\end{equation}
Thus, we get
\begin{equation}\label{S-D-H-S}
S(\rho ||\sigma ) = - S(\rho) + D(\eta _j ||\chi _j ) + H(\eta _j ),
\end{equation}
where $D(\eta _j ||\chi _j )$ is the (classical) relative entropy
of the probability distributions~$\eta _j$ and~$\chi
_j$; $H(\eta _j$ ) is the (classical) entropy of the distribution
$\eta _j$.
\smallskip
\noindent We can rewrite~\eqref{S-D-H-S} as~\eqref{S-D-H-S-1} which completes the proof.
\end{proof}

In Subsection~\ref{subsec-optimal}, we have introduced a notation $\mathbf{C}_p$ to  denote a classical data compression code (mapping) optimal for a source probability distribution~$p$. Then, according to our notation, the code~$\mathbf{C}_{p_\sigma}$ will be optimal for a (classical) i.i.d. source with probability distribution  formed by the eigenvalues of a density matrix~$\sigma$.
%
%
\begin{lem}[Classical data compression and relative entropy]\label{classical-rel-entr}
Let~$z^n \triangleq z_1,\ldots,z_n$ be a sequence emitted by a classical i.i.d. source with a marginal probability distribution formed by the eigenvalues of~$ \tilde \rho$. Then, from classical data compression theory, we have
\begin{equation}
\left\langle {\frac{1}{n} \left| \mathbf{C}_{p_\sigma}(z^n)
\right|} \right\rangle _{p_{\tilde \rho}} =
D(p_{\tilde \rho} ||p_{\sigma} ) + H(p_{\tilde \rho}),
\end{equation}
where the notation in the left-hand-side stands for the  length of compressed sequence (codeword) divided by~$n$ and averaged with respect to the probability measure~$p_{\tilde \rho}$.
\end{lem}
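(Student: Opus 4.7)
The plan is to reduce the lemma to the elementary cross-entropy identity for optimal classical codes. Since $\mathbf{C}_{p_\sigma}$ is, by definition, an optimal compression code for the $p_\sigma$-source, the (asymptotically) ideal codeword length assigned to a symbol $x$ is $\ell(x) \approx -\log p_\sigma(x)$ bits. This is the content of Shannon's source coding theorem: one can realize such lengths up to vanishing per-symbol overhead using, e.g., arithmetic coding or Huffman coding on long blocks. In particular, when a $p_\sigma$-source is actually compressed by $\mathbf{C}_{p_\sigma}$, the expected rate is $H(p_\sigma)$; when a different source is compressed by the same code, the expected rate becomes the cross entropy $-\sum_x p(x)\log p_\sigma(x)$.

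The concrete steps I would carry out are as follows. First, I would use the i.i.d. property of the source: because the $z_i$ are drawn independently from $p_{\tilde \rho}$, by additivity of (ideal) codeword lengths over independent blocks we can write
\begin{equation*}
\bigl|\mathbf{C}_{p_\sigma}(z^n)\bigr| \;\approx\; -\sum_{i=1}^{n} \log p_\sigma(z_i).
\end{equation*}
Second, I would take the expectation with respect to $p_{\tilde \rho}$ and use linearity together with the identical distribution of the $z_i$:
\begin{equation*}
\Bigl\langle \bigl|\mathbf{C}_{p_\sigma}(z^n)\bigr| \Bigr\rangle_{p_{\tilde \rho}} = -n \sum_{x \in \{0,1\}} p_{\tilde \rho}(x)\,\log p_\sigma(x).
\end{equation*}
Third, I would apply the standard algebraic decomposition of cross entropy,
\begin{equation*}
-\sum_x p_{\tilde \rho}(x) \log p_\sigma(x) = -\sum_x p_{\tilde \rho}(x) \log p_{\tilde \rho}(x) + \sum_x p_{\tilde \rho}(x) \log \frac{p_{\tilde \rho}(x)}{p_\sigma(x)} = H(p_{\tilde \rho}) + D(p_{\tilde \rho}\|p_\sigma),
\end{equation*}
and divide by $n$ to obtain the claimed identity.

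The main obstacle, and the place where some care is warranted, is the integer-length constraint on codewords: for a single-symbol code one only has $H(p_\sigma) \leq \langle \ell \rangle_{p_\sigma} < H(p_\sigma) + 1$, so the equalities above are strictly speaking asymptotic (a $O(1/n)$ per-symbol overhead remains at finite $n$). I would handle this by either (i) explicitly invoking block coding, so that the per-symbol overhead can be driven below any $\varepsilon > 0$, or (ii) appealing to arithmetic coding, which attains cross entropy to within $2/n$ bits per symbol. Since the surrounding paper works in the Schumacher asymptotic regime of large $n$ and high fidelity, this is harmless and the equality in the lemma should be read in that limiting sense; one could alternatively restate it as an asymptotic equality $\lim_{n\to\infty} \tfrac{1}{n}\langle |\mathbf{C}_{p_\sigma}(z^n)| \rangle_{p_{\tilde \rho}} = D(p_{\tilde \rho}\|p_\sigma) + H(p_{\tilde \rho})$.
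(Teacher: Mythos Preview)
The paper does not actually prove this lemma: it is stated as a known fact ``from classical data compression theory,'' and the surrounding discussion in Section~\ref{ClassicalDataCompression} simply asserts, with a reference to Cover and Thomas, that a code optimal for a $p$-source used on a $q$-source costs $H(q)+D(q\|p)$ bits per symbol. So there is no paper proof to compare against; your argument is precisely the standard textbook derivation that the paper is implicitly citing.

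Your reasoning is correct. The cross-entropy identity $-\sum_x p_{\tilde\rho}(x)\log p_\sigma(x)=H(p_{\tilde\rho})+D(p_{\tilde\rho}\|p_\sigma)$ is the heart of the matter, and identifying the expected per-symbol length of $\mathbf{C}_{p_\sigma}$ with this cross entropy is exactly what Shannon source coding gives. Your caveat about the integer-length constraint and the resulting $O(1/n)$ overhead is well placed and, if anything, more careful than the paper itself, which writes the relation as an exact equality without comment; the paper's Definition of an optimal code (compression rate exactly $H(p)$) already presumes the idealized/asymptotic regime, so your reading of the equality as a large-$n$ limit is consistent with how the paper is using these notions throughout.
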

%
Combining Lemma~\ref{main-lem}, Lemma~\ref{classical-rel-entr}, and the results of Section~\ref{turning}, we obtain the following theorem:
\begin{theorem}[Quantum Data Compression and Relative Entropy]\label{matched-basis}
Let $|z_1 z_2 z_3 \ldots z_n\rangle \in {\cal H}^{\otimes  n}$   be emitted by a quantum i.i.d. source
with density operator~$\rho$ defined on a Hilbert space~${\cal H}$. Let~$\sigma$ be an arbitrary density operator defined on the same Hilbert space~${\cal H}$.

\smallskip

\noindent We choose  the eigenbasis of~$\sigma$ to be our computational basis~$\mathcal{B}$,
 and let~$\tilde \rho$ be the effective matrix of~$\rho$
with respect to~$\mathcal{B}$. Then, the following relations hold:
\begin{equation}\label{estimate}
S(\rho ||\sigma ) +  S(\rho ) =
D(p_{\tilde \rho} ||p_{\sigma} ) + H(p_{\tilde \rho}) = \left\langle {\frac{1}{n} \left| U_{\mathcal{B}, \mathbf{C}_{p_\sigma}}^n \ |z_1 z_2 z_3 \ldots z_n\rangle
\right|} \right\rangle,
\end{equation}
where the notation in the right-hand-side stands for the  length of compressed quantum sequence divided by~$n$ and averaged in the sense of a quantum-mechanical observable.
\end{theorem}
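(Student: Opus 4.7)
The plan is to split the chain of equalities in~\eqref{estimate} into two pieces and handle each separately, since each is essentially the combination of a result proved earlier in the paper with one quick bridging computation. The left equality is immediate from Lemma~\ref{main-lem}: because we have specifically chosen the computational basis $\mathcal{B}$ to be the eigenbasis $\{|\chi_j\rangle\}$ of $\sigma$, the effective matrix $\tilde\rho$ of the hypothesis in Lemma~\ref{main-lem} coincides with the $\tilde\rho$ in the theorem, and \eqref{S-D-H-S-1} is exactly the identity we need. So no new work is required for that half.

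For the right equality, the aim is to show that the quantum-mechanical expected length of $U_{\mathcal{B},\mathbf{C}_{p_\sigma}}^{n}|z_1\cdots z_n\rangle$ agrees, term by term, with the classical average codeword length $\langle \tfrac{1}{n}|\mathbf{C}_{p_\sigma}(z^n)|\rangle_{p_{\tilde\rho}}$ from Lemma~\ref{classical-rel-entr}; then that lemma finishes the job. First I would expand $|z_1\cdots z_n\rangle$ in the computational basis and recall from Section~\ref{turning} that $U_{\mathcal{B},\mathbf{C}_{p_\sigma}}^n$ permutes basis vectors according to $|e_1\cdots e_n\rangle\mapsto|\mathbf{C}_{p_\sigma}(e_1\cdots e_n)\rangle$. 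The length observable is diagonal in that output basis with eigenvalue $|\mathbf{C}_{p_\sigma}(e_1\cdots e_n)|$ on the corresponding eigenvector, so measuring it yields, via the Born rule and the i.i.d.\ structure of $\rho^{\otimes n}$,
\begin{equation*}
\left\langle \left|U_{\mathcal{B},\mathbf{C}_{p_\sigma}}^n|z_1\cdots z_n\rangle\right|\right\rangle
= \sum_{e_1,\ldots,e_n} |\mathbf{C}_{p_\sigma}(e_1\cdots e_n)| \, \prod_{i=1}^{n}\langle \chi_{e_i}|\rho|\chi_{e_i}\rangle .
\end{equation*}

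The decisive step is then the identification $\langle \chi_j|\rho|\chi_j\rangle = \eta_j = p_{\tilde\rho}(j)$. This is where the definition of the effective density matrix from Section~\ref{effective-sec} is doing the real work: the diagonal of $\rho$ in the $\sigma$-eigenbasis is precisely the eigenvalue list of $\tilde\rho$, and hence the product above is the probability that an i.i.d.\ $p_{\tilde\rho}$-source emits $e_1\cdots e_n$. Consequently the displayed sum collapses to the classical expectation $\mathbb{E}_{z^n\sim p_{\tilde\rho}^{\,n}}\bigl[|\mathbf{C}_{p_\sigma}(z^n)|\bigr]$, which after dividing by $n$ is exactly the left-hand side of Lemma~\ref{classical-rel-entr}, yielding $D(p_{\tilde\rho}\|p_\sigma)+H(p_{\tilde\rho})$.

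The main obstacle, to the extent there is one, is not an inequality or a limiting argument but a conceptual check: one must verify that the operational averaging implicit in the ``quantum-mechanical observable'' expectation on the right-hand side of \eqref{estimate} really does decompose as Born-rule measurement of the length observable followed by a source average, and that for the particular unitary $U_{\mathcal{B},\mathbf{C}_{p_\sigma}}^n$ this decomposition factorizes through the effective diagonal distribution rather than the full density matrix. Once that reduction is made explicit, the theorem is just the concatenation of Lemma~\ref{main-lem} (chosen basis $=\sigma$-eigenbasis) with Lemma~\ref{classical-rel-entr} applied to the codeword distribution $p_{\tilde\rho}^{\,n}$.
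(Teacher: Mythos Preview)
Your proposal is correct and follows exactly the route the paper takes: the theorem is stated there as an immediate consequence of combining Lemma~\ref{main-lem}, Lemma~\ref{classical-rel-entr}, and the classical-to-quantum code construction of Section~\ref{turning}. You have simply spelled out the bridging computation (that the Born-rule expectation of the length observable reduces to the classical $p_{\tilde\rho}$-average because $\langle\chi_j|\rho|\chi_j\rangle=\eta_j$) more explicitly than the paper does, but the structure of the argument is identical.
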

\begin{remark}
 Quantum relative entropy~$S(\rho ||\sigma )$ has the following operational meaning. If a quantum compression code, optimal for one quantum information source with a density operator~$\sigma$, is applied to another source  with a density operator~$\rho$, then the increase  of the compression  rate is equal to~$S(\rho ||\sigma )$.
\end{remark}
\begin{remark}
By fixing~$\rho$ and the computational basis~$\mathcal{B}$ (i.e. the eigenbasis of~$\sigma$), we will also fix$~\tilde \rho$. Then, for a fixed~$~\tilde \rho$, $D(p_{\tilde \rho} ||p_{\sigma} )$ is minimized (and is therefore  equal to zero) when $\sigma = \tilde \rho$. Then, as it follows from Theorem~\ref{matched-basis}, for an arbitrary computational basis~$\mathcal{B}$, the best possible compression rate  is equal to~$H(p_{\tilde \rho}) = S(\tilde \rho)$ and is achieved with transformation~$U_{\mathcal{B}, \mathbf{C}_{\tilde p}}^n$. In~\cite{universal-iid-unknown-entropy}, such setup is called   ``mismatched  bases''  compression.

If we choose  the eigenbasis of~$\rho$ to be our computational basis~$\mathcal{B}$, then we have~$\tilde \rho = \rho$, and optimal quantum compression is achieved with~$U_{\mathcal{B}, \mathbf{C}_{p_\rho}}^n$, with compression rate $H(p_{\rho}) = S(\rho)$.
\end{remark}
\begin{discussion}
 As long as~$U_{\mathcal{B}, \mathbf{C}_{p_\sigma}}^n$ is computed in a computational basis~$\mathcal{B}$, we can simply replace the original source density matrix~$\rho$ with~$\tilde \rho$, the effective one with respect to the computational basis. Then, our quantum compression/coding task is effectively reduced to a classical one, with classical information sources having probability distributions~$p_{\tilde \rho}$ and~$p_{\sigma}$. Therefore, all relevant results from classical information theory  and data compression become applicable. We have already seen one such result, the statement of data compression (non-)optimality~$D(p_{\tilde \rho} ||p_{\sigma} ) + H(p_{\tilde \rho})$.

 To give another example, we consider the following task of quantum relative entropy estimation.  Suppose we have two quantum i.i.d. sources with unknown density matrices~$\rho$ and~$\sigma$, where~$\rho$ and~$\sigma$ defined on the same Hilbert space. We want to estimate quantum relative entropy~$S(\rho ||\sigma )$.

 Let~$|z_1 z_2 z_3 \ldots z_n\rangle$  and~$|x_1 x_2 x_3 \ldots x_n\rangle $ be emitted by the sources with density matrices~$\rho$ and~$\sigma$, respectively. From Theorem~\ref{matched-basis}, one could (hope) to estimate the quantity~$S(\rho ||\sigma ) +  S(\rho )$ by simply measuring the average  length  of codeword~$U_{\mathcal{B}, \mathbf{C}_{p_\sigma}}^n \ |z_1 z_2 z_3 \ldots z_n\rangle$, then estimate~$S(\rho)$ by measuring the average  length  of codeword~$U_{\mathcal{B}, \mathbf{C}_{p_\rho}}^n \ |z_1 z_2 z_3 \ldots z_n\rangle$, and then subtract the latter  from the former. For such straightforward  approach to work, one should have quantum compression codes~$U_{\mathcal{B}, \mathbf{C}_{p_\sigma}}^n$ and~$U_{\mathcal{B}, \mathbf{C}_{p_\rho}}^n$, which are optimal for sources with density matrices~$\rho$ and~$\sigma$, respectively. But as~$\rho$ and~$\sigma$ are unknown, so are~$U_{\mathcal{B}, \mathbf{C}_{p_\sigma}}^n$ and~$U_{\mathcal{B}, \mathbf{C}_{p_\rho}}^n$. Fortunately, in classical data compression, there exist so-called ``universal compression codes''\cite{relative-ziv}, which will let estimate (classical) quantities~$D(p_{\tilde \rho} ||p_{\sigma} ) + H(p_{\tilde \rho})$ and~$ H(p_{\tilde \rho})$ even without knowing probability  distributions~$p_{\tilde \rho}$ and~$p_{\sigma}$. So, based on these classical codes, we can construct quantum codes\cite{kaltchenko} (unitary transformations) to estimate~$S(\rho ||\sigma )$.

\end{discussion}

\section{Conclusion}
We have provided a simple and constructive proof of a quantum analog of a well-known classical information theory result on a role of relative entropy as a measure of non-optimality in data compression. We have constructed  quantum compression codes (schemes) from classical data compression codes. %
Moreover, as the quantum data compression/coding task can be effectively reduced to a (quasi-)classical one, we show that relevant results from classical information theory and data compression become applicable and therefore can be extended to the quantum domain.

\end{document}